\newcommand{\longversion}[1]{}
\definecolor{light-gray}{gray}{0.9}
	\newtheorem{lemma}{Lemma}%
	\newtheorem{theorem}{Theorem}%
	\newtheorem{corollary}{Corollary}%
	\newtheorem{example}{Example}
		\newtheorem{claim}{Claim}
	\newlength{\wordlength}
	\newcommand{\midd}{\mathbin{:}}
	\newcommand{\pref}{\succ\xspace}
\begin{document}

	\title{Complexity of Manipulating Sequential Allocation}
    		\author{Haris Aziz}\ead{haris.aziz@nicta.com.au}
                \address{Data61 and University of New South Wales, Sydney, Australia}
                                		\author{Sylvain Bouveret} \ead{sylvain.bouveret@imag.fr}
                                                    \address{LIG - Grenoble INP, France}
            \author{J{\'{e}}r{\^{o}}me Lang}\ead{Jerome.Lang@irit.fr}
            \address{LAMSADE and Universit{\'e} Paris-Dauphine, France}
    		\author{Simon Mackenzie} \ead{simon.mackenzie@nicta.com.au}
                \address{Data61 and University of New South Wales, Sydney, Australia}

\begin{abstract}
	Sequential allocation is a simple allocation mechanism in which agents are given pre-specified turns and each agents gets the most preferred item that is still available. It has long been known that sequential allocation is not strategyproof. 

Bouveret and Lang (2014) presented a polynomial-time algorithm to compute a best response of an agent with respect to additively separable utilities and claimed that (1) their algorithm correctly finds a best response, and (2) each best response results in the same allocation for the manipulator. We show that both claims are false via an example. We then show that in fact the problem of computing a best response is NP-complete. 
On the other hand, the insights and results of Bouveret and Lang (2014) for the case of two agents still hold. 

\end{abstract}

\maketitle


\section{Introduction}

	A simple but popular mechanism to allocate
	indivisible items is {\em sequential allocation} 
	\citep{AWX15b,BoLa11a,BrSt79a,BrTa96a,KNW13a,KoCh71a,LeSt12a}. 
	In sequential allocation, a sequence specifies the turns of the agents. 
For example, for sequence 1212, agents 1 and 2 alternate with agent 1 taking the first turn. 	
	Agent take turns according to the sequence 
	and are given the 
	the most preferred item that has not yet been allocated. 
Sequential allocation is an ordinal mechanism since the outcome only depends on the ordinal preferences of agents over items. Nevertheless, it is a standard assumption in the literature that agents have underlying additive utilities for the items.
	
		 	It has long been known that sequential allocation is not strategyproof in particular when agents do not have consecutive turns. 
        This motivates the natural problem of computing best responses (also referred to as manipulations). 
			 \citet{KoCh71a} presented a polynomial-time algorithm to compute the optimal manipulation of an agent when there are two agents and the sequence is alternating (121212..). \citet{BoLa11a} initiated further work on manipulation of sequential allocation. They showed that it can be checked in polynomial time whether an agent can be allocated a certain subset of items. Since there can be exponential number of subsets, the result does not show that finding the optimal response is polynomial-time solvable as well.

\paragraph{Results}
	
We on focus on computing best responses (or manipulations) under sequential allocation. 
Recently, \citet{BoLa14a,BoLa14b} presented a polynomial-time algorithm to compute a best response of an agent and claimed that (1) their algorithm correctly finds an optimal response and (2) each best response results in the same allocation for the manipulator. The result has been cited in a number of works~\citep[see e.g.,][]{AGM+15c,HoLa15a,LaRe15a,Wals16a}. 
We first show that both claims are false by the help of an example. We then show that in fact the problem of computing a best response is NP-complete.
The result has some interesting consequences since many allocation rules are based on sequential allocation and for all such rules, there cannot be a general polynomial-time to manipulate the mechanism unless P=NP. 
Since manipulation by even one agent is NP-hard, the NP-hardness also implies a number of NP-hardness results by \citet{BoLa14a,BoLa14b} on \emph{coalitional} manipulation.

\section{Preliminaries}

We consider the setting in which we have $N=\{1,\ldots, n\}$ a set of agents, $O=\{o_1,\ldots, o_m\}$ a set of items, and the preference profile $\pref=(\pref_1,\ldots, \pref_n)$ specifies for each agent $i$ his complete, strict, and transitive preference $\pref_i$ over $O$. 

Each agent may additionally express a cardinal utility function $u_i$ consistent with $\pref_i$: $u_i(o)> u_i(o') \text{ iff } o\pref_i o'.$ We will assume that each item is positively valued, i.e, $u_i(o)>0$ for all $i\in N$ and $o\in O$. The set of all utility functions consistent with $\pref_i$ is denoted by $\mathcal{U}(\pref_i)$. We will denote by $\mathcal{U}(\pref)$ the set of all utility profiles $u=(u_1,\ldots, u_n)$ such that $u_i\in \mathcal{U}(\pref_i)$ for each $i\in N$. 
When we consider agents' valuations according to their cardinal utilities, then we will assume additivity, that is $u_i(O')=\sum_{o\in O'}u_i(o)$ for each $i\in N$ and $O'\subseteq O$.

At times, we will present an assignment in matrix form.
			An \emph{assignment} is an $n\times m$ matrix $[p(i)(o_j)]_{\substack{1\leq i\leq n, 1\leq j\leq m}}$ such that for all $i\in N$, and $o_j\in O$, $p(i)(h_j)\in \{0,1\}$;  and for all $j\in \{1,\ldots, m\}$, $\sum_{i\in N}p(i)(o_j)= 1$. 
An agent $i$ gets item $o_j$ if and only if $p(i)(o_j)= 1$.
 Each row $p(i)=(p(i)(h_1),\ldots, p(i)(o_m))$ represents the \emph{allocation} of agent $i$.

We say that utilities are \emph{lexicographic} if for each agent $i\in N$, {$u_i(o)>\sum_{o'\prec_i o}u_i(o')$}. By $S\pref_i T$, we will mean $u_i(S)\geq u_i(T)$.

	\begin{example} Consider the setting in which $N=\{1,2\}$, $O=\{o_1,o_2,o_3,o_4\}$, the preferences of agents are
		\begin{align*}
			1:\quad o_1, o_2, o_3, o_4\\
			2:\quad o_1, o_3, o_2, o_4
			\end{align*}
		Then for the policy $1221$, agent $1$ gets $\{o_1,o_4\}$ while $2$ gets $\{o_2,o_3\}$. The assignment resulting from sequential allocation (SA) can be represented as follows.
		
		\[
		SA(\pref_1,\pref_2) = \begin{pmatrix}
			1&0&0&1\\
			0&1&1&0
		\end{pmatrix}.
		\] 

		\end{example}

\section{Computing a Best Response: Case of Three or More Agents}

In the next example, we highlight that for $n\geq 3$, the best response algorithm of \citet{BoLa14a,BoLa14b} does not work and that an optimal manipulation may not result in a unique allocation.

\begin{example}

Observe the following preference profile. 
\begin{align*}
1:&\quad a, b, c, d\\
2:&\quad c, d, a, b\\
3:&\quad a, b, c, d
\end{align*}

Let the sequence be 1231.
Now according to the best response algorithm of \citet{BoLa14a}, the best response is one in which agent $1$ gets $\{a,d\}$ which can even be achieved by the truthful report. The reason the algorithm of \citet{BoLa14a,BoLa14b} returns the truthful report as the best response is because it first construct allocation $\{a\}$ and checks that it is achievable; then it checks whether $\{a,b\}$ is achievable and finds that it is not achievable; then it checks whether $\{a,c\}$ is achievable and finds that it is not achievable, and finally the algorithm terminates   when it is found that $\{a,d\}$ is achievable for agent 1 via the truthful report.

Let us assume that agent $1$ misreports so that the reported preference profile is as follows: 
\begin{align*}
1':&\quad c,b,a,d\\
2:&\quad c, d, a, b\\
3:&\quad a, b, c, d
\end{align*}

Under the misreport, agent $1$ gets $\{b,c\}$.
Agent 1 cannot get $\{a,b\}$ or $\{a,c\}$ which are not achievable. But he can get $\{a,d\}$ or $\{b,c\}$ depending on what he reports. Now which one is better clearly depends on the actual utilities and not just on the ordinal preferences. For example if the utilities are 3.1, 3, 2 and 1 then $\{b,c\}$ is preferred over $\{a,d\}$ and hence $\{a,d\}$ is not the best possible achievable allocation for agent $1$ which means that

Moreover, if the utilities are $4,3,2,1$, then agent $1$ is completely indifferent between the allocations $\{a,d\}$ and $\{b,c\}$. This proves that there may be best responses that do not yield a unique allocation for the manipulator when the number of non-manipulators is 2 or more (i.e., when the total number of agents is 3 or more).
\end{example}

The example above simply shows that algorithm of \citet{BoLa14a,BoLa14b} does not necessarily compute a best response that gives maximum utility to the agent. It does not settle the complexity of computing a best response. Next, we show that the problem is NP-hard. The reduction involves a similar high-level idea as that of the result by \citet{AGM+15c} that manipulating the probabilistic serial (PS) mechanism is NP-hard.
However, the reduction requires new gadgets.
Also note that the NP-hardness result for the PS mechanism does not directly imply a similar result for sequential allocation. Similarly, NP-hardness to manipulate sequential allocation does not imply NP-hardness to manipulate the PS mechanism. 



\begin{theorem}\label{th:main}
	Computing a best response for the sequential allocation mechanism is NP-complete. 
	\end{theorem}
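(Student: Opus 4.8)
The plan is to prove NP-completeness in two parts. Membership in NP is the easy direction: given a claimed manipulation (a reported preference $\pref_1'$ for the manipulator, say agent $1$), one can run the sequential allocation mechanism in polynomial time, compute the utility $u_1$ of the resulting bundle, and verify that it meets or exceeds a target threshold. Since the mechanism is deterministic and runs in polynomial time, the certificate (the reported preference order) has polynomial size and is polynomial-time checkable. So the whole effort goes into NP-hardness.

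For hardness I would reduce from a suitable NP-complete problem and, following the hint in the excerpt, model the reduction on the one used for the probabilistic serial (PS) mechanism but with new gadgets tailored to sequential allocation. A natural source problem is an exact-cover / set-packing style problem or 3-dimensional matching, since those combinatorial structures map well onto ``which items does the manipulator grab on his turns.'' The reduction would build an instance with the manipulator (agent $1$) plus a collection of auxiliary non-manipulating agents whose fixed truthful preferences, together with a carefully chosen turn sequence, force a tight competition over a set of ``element items'' and ``selector items.'' The manipulator's reported order then encodes a choice of which sets to pick; the sequence is arranged so that the items agent $1$ actually receives correspond exactly to a feasible solution of the combinatorial instance, and agent $1$'s additive utility exceeds a threshold if and only if the underlying instance is a yes-instance. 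The lexicographic or near-lexicographic utility values from the motivating example suggest that one assigns utilities so that the manipulator is forced to make an all-or-nothing trade-off, exactly the kind of discrete choice needed to encode an NP-hard decision.

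The key steps, in order, would be: (1) fix the source NP-complete problem and its parameters; (2) design the item set, splitting it into gadget items whose sole purpose is to control the turns of the non-manipulators and ``payload'' items that carry the manipulator's utility; (3) specify the truthful preferences of all non-manipulating agents and the turn sequence so that, on each of agent $1$'s turns, the set of still-available items is exactly the set the reduction intends to offer him; (4) set the utility function $u_1 \in \mathcal{U}(\pref_1)$ and the threshold; and (5) prove the two-way correspondence, i.e.\ a yes-instance of the source problem yields a report achieving the threshold, and conversely any report achieving the threshold decodes back to a yes-instance. Steps (3) and (5) are where the real content lies.

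I expect the main obstacle to be the forward-simulation control in step (3): because sequential allocation is a greedy process, the items consumed by the non-manipulators before agent $1$'s turns depend on what agent $1$ has already taken, so the gadget must be robust to every possible reported order of the manipulator, not just the intended one. In particular I must guarantee that agent $1$ can never ``cheat the gadget'' by grabbing a control item to knock a non-manipulator off course and thereby gain more utility than the intended encoding permits; ruling out all such unintended strategies is exactly what makes the converse direction of step (5) delicate. Getting the utility gaps right so that partial or spurious selections fall strictly below the threshold, while genuine solutions clear it, will require the careful quantitative bookkeeping that I would only sketch here and verify in full in the formal proof.
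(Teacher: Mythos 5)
Your NP-membership argument is correct and matches the paper's: the reported preference order is a polynomial-size certificate, and simulating sequential allocation plus summing utilities verifies it in polynomial time. But the hardness direction, which you yourself note is ``where the real content lies,'' is never actually carried out: you name the shape of a reduction (a source problem, control gadgets, payload items, a threshold, a two-way correspondence) and you correctly identify the central difficulty (the gadget must be robust against \emph{every} report the manipulator might make, not just the intended encodings), but you construct nothing --- no concrete item set, no turn sequence, no non-manipulator preferences, no utility values, and no proof of either direction of the correspondence. Identifying the obstacle is not the same as overcoming it; as written, the proposal is a plan for a proof rather than a proof, and the theorem remains unestablished.

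For comparison, the paper resolves exactly the difficulty you flag by a specific construction: it reduces from a restricted 3SAT in which each literal occurs exactly twice (not from exact cover or 3DM, which you leave as open options), and that restriction is load-bearing --- it permits exactly two agents $a_{x}^1, a_{x}^2$ per literal, each of whom competes in exactly one clause round. The manipulator's truth assignment is encoded by ``choice rounds'' in which picking the pair $o_{\neg x_i}^1, o_{\neg x_i}^2$ (or $o_{x_i}^1, o_{x_i}^2$) sets $x_i$ true (or false); ``consistency items'' $h_{x_i}^j$ are arranged so that an inconsistent pick strictly lowers agent $1$'s attainable utility, which is how unintended strategies are priced out; and ``clause rounds'' plus a final ``collection round'' ensure agent $1$ obtains the valuable item $o_c^1$ for clause $c$ if and only if some literal of $c$ is satisfied under the encoded assignment. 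Your step (5) converse --- that \emph{any} report reaching the threshold decodes to a satisfying assignment --- is precisely what the paper's claims about inconsistent choices and unsatisfied clauses establish, and it is the part your proposal would still have to invent from scratch.
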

	\begin{proof}

		To show hardness, we prove that the following problem (\textsc{Best Response}) is NP-complete: given an assignment setting and a utility function $u:O\rightarrow \mathbb{N}$ specifying the utility of each item for the manipulator (agent $1$) and a target utility $T$, can the manipulator specify preferences such that the utility for his allocation under the sequential allocation rule is at least $T$? 
The problem \textsc{Best Response} is clearly in NP. The outcome with respect to the reported preference can be computed by simulating sequential allocation. The utility achieved by the agent can be computed by adding the utility of the items allocated to the agent.

		We reduce from a restricted NP-complete version of 3SAT where each literal appears exactly twice in the formula. 
        The problem remains NP-complete~\citep[Lemma 1, ][]{BCF+15a}.
        Given such a 3SAT instance $F=(X,C)$ where $X = \{x_1,\dots,x_{|X|}\}$ is the set of variables and $C$ the set of clauses, we build an instance of \textsc{Best Response} where the manipulator can obtain utility $\geq T$ if and only if the formula is satisfiable. 
		We will denote by $L$ the set of literals. 
		
The set of agents is composed of 
	\begin{itemize}
		\item Agent 1 the manipulator/responder; and
		\item agents $a_{x_i}^1$ and $a_{x_i}^2$ for each literal $x_i$. 
	\end{itemize}

		To summarize, the agent set is: 
	\[N=\{1\}\cup \{a_{x_i}^1,a_{x_i}^2\midd x_i \in L\}.\]

	The set of items $O$ is as follows:
	\begin{itemize}
		\item Clause items $o_c^1, o_c^2,o_c^3$ for each clause $c$;
		\item Choice items $o_{x_i}^1, o_{x_i}^2$ for each literal $x_i$;
		\item Consistency items $h_{x_i}^1,h_{x_i}^2,h_{x_i}^3$ for each literal $x_i$;
		\item Dummy items $d_{x_i}^{11}, d_{x_i}^{12}, d_{x_i}^{21}, d_{x_i}^{22}$ for each literal $x_i$. 
	\end{itemize}	
	To summarize, the item set is: 
	\[O=\{o_c^1, o_c^2,o_c^3 \midd c\in C\} \cup \{o_{x_i}^1, o_{x_i}^2, h_{x_i}^1,h_{x_i}^2,h_{x_i}^3, d_{x_i}^{11}, d_{x_i}^{12}, d_{x_i}^{21}, d_{x_i}^{22}\midd x_i\in L\}\]
	
	We view the sequential allocation process as follows. The preferences are built in a way so that the agents go through $|X|$ choice rounds corresponding to variables $x_1,\ldots, x_{|X|}$ and then $|C|$ clause rounds corresponding to $c_1,\ldots, c_{|C|}$ with one final round called the collection round in wich

\paragraph{High-level Idea} The items that agent $1$ potentially gets in each round have considerably more utility than items he is supposed to get in latter rounds. If agent $1$ does not get the items systematically, they will be taken by other agents and and then agent $1$ will not be able to make up for the loss of not getting those items first. This ensures that in each round agent $1$ makes a choice between the items corresponding to the positive and negative literals of a variable.
There is negligible difference between the utility of the items corresponding to the literal and its negation (for example $o_x^1$ and $o_{\neg x}^1$) so what is important is that agent chooses one of the items corresponding to the literals. If agent $1$ makes the correct choices, then it will ensure that it gets a most preferred items corresponding to each of the clause items.  	
We want to show that there is a satisfiable assignment if and only if agent $1$ gets utility $T$. This is only possible if agent $1$ sets the choice variables in a consistent way and manages to get the most preferred clause item for each clause which is only possible if each clause is set to be true which in turn is only possible if the agents corresponding to the negation of the literal in the clause do not get the clause items. The way the reduction works is also illustrated in Example~\ref{example:sat} right after the proof.
	

	\paragraph{Choice Round}

	\begin{table}[h!]
		\centering
			\scalebox{0.63}{
	\begin{tabular}{l|ccccc|ccccc|cccccc}
	  \hline
	\textbf{Stage}&  1 & 2 &3 &4&5&6&7&8&9&10&11&12&13&14&15&16\\
	  \hline 
	\textbf{Agent }&1& $a_{\neg x_i}^1$&$a_{\neg x_i}^2$ &$a_{x_i}^1$&$a_{ x_i}^2$&1&$a_{\neg x_i}^1$&$a_{\neg x_i}^2$&$a_{x_i}^1$&$a_{x_i}^2$&$a_{\neg x_i}^1$&$a_{\neg x_i}^2$&1&$a_{x_i}^1$&$a_{x_i}^2$&1  \\
	 \hline
	\textbf{Item picked}& $o_{\neg x_i}^1$& $o_{x_i}^1$&$d_{x_i}^{21}$ &$d_{\neg x_i}^{11}$&$d_{\neg x_i}^{21}$&$o_{\neg x_i}^2$&$d_{x_i}^{11}$&$o_{x_i}^2$&$h_{\neg x_i}^{1}$&$h_{\neg x_i}^{2}$&$d_{x_i}^{12}$&$d_{x_i}^{22}$&$h_{\neg x_i}^{3}$&$d_{\neg x_i}^{12}$&$d_{\neg x_i}^{22}$&$h_{x_i}^1$  \\
	  \hline
	\end{tabular}
	}
	\caption{Choice round for variable $x_i$ in which agent $1$ makes consistent choice $o_{\neg x_i}^1$ and $o_{\neg x_i}^2$ so that variable $x_i$ is set to true. Agents $a_{\neg x_i}^1$ and $a_{\neg x_i}^2$ next focus on items $h_{x_i}^2$ and $h_{x_i}^3$ before turning their attention to the clause items. On the other hand, $a_{x_i}^1$ and $a_{x_i}^2$ already want to get clause items. In this way the literal that is set false, their corresponding agents are quicker to get their clause items.}
	\label{table:consistent-true-new}
	\end{table}

	\begin{table}[h!]
		\centering
			\scalebox{0.63}{
	\begin{tabular}{l|ccccc|ccccc|cccccc}
	  \hline
	\textbf{Stage}&  1 & 2 &3 &4&5&6&7&8&9&10&11&12&13&14&15&16\\
	  \hline 
	\textbf{Agent }&1& $a_{\neg x_i}^1$&$a_{\neg x_i}^2$ &$a_{x_i}^1$&$a_{ x_i}^2$&1&$a_{\neg x_i}^1$&$a_{\neg x_i}^2$&$a_{x_i}^1$&$a_{x_i}^2$&$a_{\neg x_i}^1$&$a_{\neg x_i}^2$&1&$a_{x_i}^1$&$a_{x_i}^2$&1  \\
	 \hline
	\textbf{Item picked}& $o_{x_i}^1$& $d_{x_i}^{11}$&$d_{x_i}^{21}$ &$o_{\neg x_i}^1$&$d_{\neg  x_i}^{21}$&$o_{x_i}^2$&$d_{x_i}^{12}$&$d_{x_i}^{22}$&$d_{\neg  x_i}^{11}$&$o_{\neg x_i}^2$&$h_{x_i}^1$&$h_{x_i}^2$&$h_{\neg x_i}^1$&$h_{\neg  x_i}^{2}$&$h_{\neg  x_i}^{3}$&$h_{x_i}^3$  \\
	  \hline
	\end{tabular}
	}
	\caption{Choice round for variable $x_i$ in which agent $1$ makes consistent choice $o_{x_i}^1$ and $o_{x_i}^2$ so that variable $x_i$ is set to false. Agents $a_{x_i}^1$ and $a_{x_i}^2$ next focus on items $d_{\neg x_i}^{12}$ and $d_{\neg x_i}^{22}$  before turning their attention to the clause items. On the other hand, $a_{\neg x_i}^1$ and $a_{\neg x_i}^2$ already want to get clause items. In this way the literal that is set false, their corresponding agents are quicker to get their clause items.}
	\label{table:consistent-false-new}
	\end{table}

    \begin{table}[h!]
        \centering
            \scalebox{0.63}{
    \begin{tabular}{l|ccccc|ccccc|cccccc}
      \hline
    \textbf{Stage}&  1 & 2 &3 &4&5&6&7&8&9&10&11&12&13&14&15&16\\
      \hline
    \textbf{Agent }&1& $a_{\neg x_i}^1$&$a_{\neg x_i}^2$ &$a_{x_i}^1$&$a_{ x_i}^2$&1&$a_{\neg x_i}^1$&$a_{\neg x_i}^2$&$a_{x_i}^1$&$a_{x_i}^2$&$a_{\neg x_i}^1$&$a_{\neg x_i}^2$&1&$a_{x_i}^1$&$a_{x_i}^2$&1  \\
     \hline
    \textbf{Item picked}& $o_{x_i}^1$& $d_{x_i}^{11}$&$d_{x_i}^{21}$ &$o_{\neg x_i}^1$&$d_{\neg  x_i}^{21}$&$o_{\neg x_i}^2$&$d_{ x_i}^{12}$&$o_{x_1}^2$&$d_{\neg x_1}^{11}$&$h_{\neg x_i}^{1}$&$h_{x_i}^1$&$d_{ x_i}^{22}$&$h_{\neg x_i}^2$&$h_{\neg x_i}^{2}$&$h_{\neg x_i}^{3}$&$h_{x_i}^2$  \\
      \hline
    \end{tabular}
    }
    \caption{Choice round for variable $x_i$ in which agent $1$ makes inconsistent choice $o_{x_i}^1$ and $o_{\neg x_i}^2$. As a result of making an inconsistent choice, agent 1 does not get $h_{\neg x_i}^1$ or $h_{x_i}^1$.}
    \label{table:inconsistent-false-new}
    \end{table}

	\begin{table}[h!]
		\centering
			\scalebox{0.63}{
	\begin{tabular}{l|ccccc|ccccc|cccccc}
	  \hline
	\textbf{Stage}&  1 & 2 &3 &4&5&6&7&8&9&10&11&12&13&14&15&16\\
	  \hline 
	\textbf{Agent }&1& $a_{\neg x_i}^1$&$a_{\neg x_i}^2$ &$a_{x_i}^1$&$a_{ x_i}^2$&1&$a_{\neg x_i}^1$&$a_{\neg x_i}^2$&$a_{x_i}^1$&$a_{x_i}^2$&$a_{\neg x_i}^1$&$a_{\neg x_i}^2$&1&$a_{x_i}^1$&$a_{x_i}^2$&1  \\
	 \hline
	\textbf{Item picked}& $o_{\neg x_i}^1$& $o_{x_i}^1$&$d_{x_i}^{21}$ &$d_{\neg x_i}^{11}$&$d_{\neg x_i}^{21}$&$o_{x_i}^2$&$d_{x_i}^{11}$&$d_{x_i}^{22}$&$h_{\neg x_i}^{1}$&$o_{\neg x_i}^{2}$&$d_{x_i}^{12}$&$h_{x_i}^{1}$&$h_{\neg x_i}^{2}$&$d_{\neg x_i}^{12}$&$d_{\neg x_i}^{22}$&$h_{x_i}^2$  \\
	  \hline
	\end{tabular}
	}
	\caption{Choice round for variable $x_i$ in which agent $1$ makes inconsistent choice $o_{\neg x_i}^1$ and $o_{x_i}^2$ }
	\label{table:inconsistent-true-new}
	\end{table}

	This choice round is composed of two back to back sub rounds in which agent $1$ has to choose a literal corresponding to a variable and then do this again.

	Let us consider the truth assignment corresponding to the choices agent $1$ makes. A literal $x_i$ is considered to be true iff agent $1$ picks $ o_{\neg x_i}^1$ and $ o_{\neg x_i}^2$  in choice round $i$. 
     
The sequence in choice round $i$ for variable $x_i$ is as follows

    \[1, a_{\neg x_i}^1, a_{\neg x_i}^2, a_{x_i}^1, a_{ x_i}^2, 1, a_{\neg x_i}^1, a_{\neg x_i}^2, a_{x_i}^1, a_{x_i}^2, a_{\neg x_i}^1, a_{\neg x_i}^2,1, a_{x_i}^1, a_{x_i}^2, 1 \]

	The preferences relevant for the choice round are as follows. Each literal agent likes items corresponding to the negation of the literal. 
    
    For each variable $x_i$, agent $1$ has the following preferences
	
		\begin{align*}
			1:&\quad o_{x_i}^1, o_{\neg x_i}^1, o_{x_i}^2, o_{\neg x_i}^2, h_{\neg x_i}^1,h_{\neg x_i}^2,h_{\neg x_i}^3,h_{x_i}^1,h_{x_i}^2,h_{x_i}^3, 
			\end{align*}
            
            For each variable $x_i$, the preferences of the related agents are as follows:
    		\begin{align*}
    			a_{\neg x_i}^1:&\quad o_{x_i}^1, d_{x_i}^{11}, d_{x_i}^{12},o_{x_i}^2, h_{x_i}^1,  h_{x_i}^2,h_{x_i}^3 \\
    			a_{\neg x_i}^2:&\quad d_{x_i}^{21}, o_{x_i}^1, o_{x_i}^2,{d_{x_i}^{22}}, h_{x_i}^1,  h_{x_i}^2,h_{x_i}^3\\
    				\midrule
        			a_{x_i}^1:&\quad o_{\neg x_i}^1, d_{\neg x_i}^{11}, h_{\neg x_i}^1,o_{\neg x_i}^2, h_{\neg x_i}^2,h_{\neg x_i}^3,d_{\neg x_i}^{12} \\
        			a_{x_i}^2:&\quad d_{\neg x_i}^{21}, o_{\neg x_i}^1, o_{\neg x_i}^2, h_{\neg x_i}^1,h_{\neg x_i}^2,h_{\neg x_i}^3,d_{\neg x_i}^{22}
    			\end{align*}
			
			Any items that are not in the preference list of an agent are considered to be far down in the preference list. Note the difference in the preferences of the negative versus the positive literals: the positive literal agents have a dummy item as the least preferred item relevant to the picking in the choice round whereas the negative literal agents have the consistency items as the least preferred items relevant to the picking in the choice round.

The preferences agents in the choice round are made in such a way to that agent $1$ is compelled to make consistent choice so that 
it not only gets literal items corresponding to its choice but also one if its most two preferred consistency items{.}
If agent  makes a consistent choice $o_{x_i}^1$ and $o_{x_i}^2$, or $\neg o_{x_i}^1$ and $\neg o_{x_i}^2$, then $1$ gets either $\{h_{x_i}^1,h_{\neg x_i}^3\}$ or $\{h_{\neg x_i}^1,h_{x_i}^3\}$ for that round. This scenario is captured in  Table~\ref{table:consistent-true-new} and Table~\ref{table:consistent-false-new}.
If agent $1$ makes an inconsistent choice $o_{x_i}^1$ and $\neg o_{x_i}^2$, or $\neg o_{x_i}^1$ and $o_{x_i}^2$, then $1$ gets $h_{x_i}^2$ and $h_{\neg x_i}^2$. This scenario is captured in Table~\ref{table:inconsistent-false-new} and Table~\ref{table:inconsistent-true-new}.

		\paragraph{Clause round}
	
		The sequence in clause round corresponding to clause $c=(x_{i}\vee \neg x_{j}\vee \neg x_{k})$ is as follows. 

		 \[{a_{\neg x_i}}, a_{x_j},a_{x_k},---,1\] 
         
         For each literal $x$ in the clause $c$, there is an agent $a_{\neg x}^1$ or $a_{\neg x}^2$ that features in the round. 
Agent $a_{\neg x}^1$ features if $c$ is the first clause in which literal $x$ is present. Agent $a_{\neg x}^2$ features if $c$ is the second clause in which literal $x$ is present. Recall that each literal occurs in exactly two clauses in the formula.         
         After all the clause rounds are finished, agent $1$ gets $|C|$ turns to get a chance to get clause items in case they are available. 
         
         For agent $1$, the {relevant} preferences in the clause round are:
 		\begin{align*}
 			1:&\quad o_{c}^1
 			\end{align*}
	
		For a variable $x$, if $x$ is a literal in the clause, the relevant preferences in this round are:
		
		\begin{align*}
			a_{\neg x}^1:&\quad h_{x}^2, h_{x}^3, o_{c}^3, o_c^2, o_c^1 \text{ such that } x\in c_j \\
			a_{\neg x}^2:&\quad  h_{x}^2, h_{x}^3, o_{c'}^3, o_{c'}^2, o_{c'}^1 \text{ such that } x\in c'\neq c\\
			\end{align*}
            
            		For a variable $x$, if $\neg x$ is a literal in the clause, the relevant preferences in this round are:

		\begin{align*}
			a_{x}^1:&\quad d_{\neg x}^{12}, o_{c}^3, o_c^2, o_c^1 \text{ such that } \neg x\in c_j \\
			a_{x}^2:&\quad  d_{\neg x}^{22}, o_{c'}^3, o_{c'}^2, o_{c'}^1 \text{ such that } \neg x\in c'\neq c\\
			\end{align*}

Please note that the items $h_{x}^2, h_{x}^3$ are the consistency items that featured in relevant items in the choice round corresponding to variable $x$ and $d_{\neg x}^{12}$ and $d_{\neg x}^{22}$ are the dummy items that featured in relevant items in the choice round corresponding to variable $x$. 
If agent $1$ has made a consistent choice in the choice round corresponding to variable $x$ and set $x$ to ``true'', then $a_{x_i}^1$ and $a_{x_i}^2$ already want to get clause items in their respective clause rounds. If agent $1$ has made a consistent choice in the choice round corresponding to variable $x$ and set $x$ to ``false'', then $a_{\neg x_i}^1$ and $a_{\neg x_i}^2$ already want to get clause items in their respective clause rounds. 

In the clause round, any literal that is not satisfied, the agent corresponding to literal gets a clause item. So for example if literal $x$ is false, then $a_{\neg x}^1$ gets a clause item for the first clause in which $x$ is present.   
$a_{x}^2$ gets a clause item for the second clause in which $x$ is present. Therefore, if all the literals of a clause $c$ are false, then agent $1$ does not get $o_c^1$.

If literal $x$ is satisfied, the agent $a_{\neg x}$ corresponding to it gets a dummy or consistency item in that round instead of a clause item.  This means that if all the literals of a clause are not satisfied, then all three clause items of a clause are gone, and agent $1$ does not get a clause item. He is only {interested} in one of the clause items $o_c^1$. The other clause items are far down in his preference list so he would rather get all the top clause item $o_c^1$ for each clause $c$ rather than $o_c^2$ and $o_c^3$.

Let us assume that for clause $c=(x_{i}\vee \neg x_{j}\vee \neg x_{k})$, variables $x_{i}$, $ x_{j}$ and  $ x_{k}$ are set to  true
 i.e., agent $1$ got choice items corresponding to the $\neg x_i$, $\neg x_j$ and $\neg x_k$. This means that in the clause round, $a_{ x_{j}}^1$ and  $a_{x_{k}}^1$ are ready to take their clause items $o_c^3$ and $o_c^2$ but $a_{\neg x_i}^1$ wants to get one of the unallocated consistency items before he is interested in consistency items $o_c^3, o_c^2,o_c^1$. This is helpful for agent $1$ because he can get $o_c^1$.
Since each literal occurs exactly twice in the formula, note that as long as $1$ makes a consistent choice, there will be another clause $c'$ in which literal $x$ is present and if $x$ is set to true, then $a_{\neg x}^2$ will get $h_{x_i}^3$ and hence $1$ will be able to get $o_{c'}$. Note that after the clause rounds all the consistency {items} are already consumed so the agent $1$ can hope to get all the top clause items if they were not already taken in the clause rounds.

\begin{table}[h!]
	\centering
		\scalebox{0.63}{
\begin{tabular}{lccc|c}
  \hline
\textbf{Stage}&  1 & 2 &3 &\text{after clause rounds}\\
  \hline 
\textbf{Agent }&$a_{\neg x_i}^1$&$a_{x_j}^1$&$a_{x_k}^1$&1\\
 \hline
\textbf{Item picked}&$h_{x_i}^2$ &$o_{c}^3$&$o_{c}^2$&$o_{c}^1$\\
  \hline
\end{tabular}
}
\caption{Clause round corresponding to clause $c=(x_{i}\vee \neg x_{j}\vee \neg x_{k})$}
\end{table}

\paragraph{Collection round}

		The sequence in the collection round 

		 \[\underbrace{1,\ldots, 1}_{|C|}\]
         
         The relevant preferences are:
 		\begin{align*}
 			1:&\quad {o_{c_{1}}},\ldots, {o_{c_{|C|}}}
 			\end{align*}
            
The idea is that if agent $1$ make choices that sets all the clauses as true, then agent $1$ gets all the clause items. Note that if $1$ makes a consistent choice for the variables but does not pick up all the clause items in the collection round (because the formula is unsatisfiable), then $1$ does not get all the clause items. Since there are items less preferred by $1$ than the $o_c^1$s such as $o_c^2$ and $o_c^3$s, agent $1$ is forced to pick a much less preferred item in the collection round.

\paragraph{Utility of Agent $1$}

The utility function $u_1$ of agent $1$ is specified as follows:

\begin{itemize}
	\item There is negligible difference between $u_1(o_{x_i}^1)$ and $u_1(o_{\neg x_i}^1)$. The utility of both is considerably more than $u_1(o_{x_i}^2)$ and $u_1(o_{\neg x_i}^2)$. There is negligible difference between $u_1(o_{x_i}^2)$ and $u_1(o_{\neg x_i}^2)$.
	\item 
For any variable $x_i$, agent $1$'s preferences over the consistency items are as follows:
\begin{align*}
1:&\quad  h_{\neg x_i}^1,h_{\neg x_i}^2,h_{\neg x_i}^3,h_{x_i}^1,h_{x_i}^2,h_{x_i}^3  
\end{align*}
 The utility is set as follows:
 $u_1(h_{x_i}^2)+u_1(h_{\neg x_i}^2)<u_1(h_{\neg x_i}^1)+u_1(h_{ x_i}^3)={u_1(h_{x_i}^1)+u_1(h_{\neg x_i}^3)}$. 
	\item All items that agent $1$ is considering getting in a  round (choice or clause) are considerably more preferred than the corresponding items in the latter rounds. 
	\item The utilities are set in a way so that as long as agent $1$ gets two items corresponding to a variable, at least one top choice consistency items in each round and his target clause item in each clause round, agent $1$ gets utility at least $T$. 
\end{itemize}

Based on construction of the choice and clause rounds, we are in a position to prove a series of claims. 
\begin{claim}
If agent $1$ does not make a consistent choice of the variable items, then it does not get utility $T$.
\end{claim}
\begin{proof}
	If $1$ does not make a choice in each choice round, his most preferred items corresponding to the literals are taken by the agents corresponding to the literal. If $1$ makes a choice in each choice round but does not make a consistent choice, then he gets $\{h_{\neg x}^2,h_{x}^2\}$ which has much less utility than $\{h_{\neg x}^1,h_{x}^3\}$ or $\{h_{x}^1,h_{\neg x}^3\}$ 
which means he cannot get total utility $T$.
	\end{proof}
	
	\begin{claim}
	If agent $1$ makes consistent choices but the assignment is not satisfying, then agent $1$ does not get utility $T$.
	\end{claim}
	\begin{proof}
		If some clause $c$ is set false, then agent $1$ is not able to $o_c^1$ because the literal agents in the clause round corresponding to $c$ take all the items $o_c^3,o_c^2,o_c^1$. This means that agent $1$ does not get utility $T$. 
		\end{proof}

		\begin{claim}
			If there exists a satisfying assignment, then agent $1$ can get utility $T$.
			\end{claim}
	\begin{proof}
		If there exists a satisfying assignment, then consider the preference report of agent $1$ in which in each choice round, he picks $o_{x_i}^1$ and $o_{x_i}^2$ if the $x_i$ is set to be false. By doing this he gets to pick a top consistency item in that round as well. Since all the clauses are satisfied, in each clause round, agent $1$ is able to gets his clause item $o_{c}^1$.
		The utilities are set in a way so that as long as agent $1$ gets two items corresponding to the same literal and hence at least one top choice consistency items in each round and his target clause item in each clause round, agent $1$ gets utility at least $T$. 
		\end{proof}

The claims show that agent $1$ gets utility at least $T$ if and only if there is a satisfying truth assignment. 
		\end{proof}


 \begin{example}    \label{example:sat}
     We illustrate the reduction in the proof of  Theorem~\ref{th:main}. 
 For the following SAT formula, we illustrate how we build an allocation setting with the agent set, item set, preferences of agents and the picking sequence.
         \begin{equation*}\small \underbrace{(x_1\vee x_2\vee x_3)}_{c_1}
        \underbrace{(\neg x_1\vee \neg x_2\vee \neg x_3)}_{c_2}
            \underbrace{(x_1\vee \neg x_2\vee x_3)}_{c_3}
            \underbrace{(\neg x_1\vee x_2\vee \neg x_3)}_{c_4}
        \end{equation*}

Set of agents is $N=\{1\}\cup \{ a_{x_i}^1, a_{x_i}^2,
a_{\neg x_i}^1, a_{\neg x_i}^2\midd i\in \{1,2,3\} \}$.

Set of items is
$O=\{o_{x_i}^1, o_{x_i}^2,o_{\neg x_i}^1, o_{\neg x_i}^2, h_{x_i}^1, h_{x_i}^2,h_{\neg x_i}^1, h_{\neg x_i}^2, {d_{x_i}^{11}}, {d_{x_i}^{12}}, {d_{x_i}^{21}}, {d_{x_i}^{22}}\midd i\in \{1,2,3\}\}$

		\begin{align*}
1:&\quad o_{x_1}^1, o_{\neg x_1}^1, o_{x_1}^2, o_{\neg x_1}^2,  h_{\neg x_1}^1,h_{\neg x_1}^2,h_{\neg x_1}^3,h_{x_1}^1,h_{x_1}^2,h_{x_1}^3,\\
&\quad o_{x_2}^1, o_{\neg x_2}^1, o_{x_2}^2, o_{\neg x_2}^2,  h_{\neg x_2}^1,h_{\neg x_2}^2,h_{\neg x_2}^3,h_{x_2}^1,h_{x_2}^2,h_{x_2}^3\\
&\quad o_{x_3}^1, o_{\neg x_3}^1, o_{x_3}^2, o_{\neg x_3}^2,  h_{\neg x_3}^1,h_{\neg x_3}^2,h_{\neg x_3}^3,h_{x_3}^1,h_{x_3}^2,h_{x_3}^3\\
                          &\quad o_{c_1}^1,o_{c_2}^1,o_{c_3}^1,o_{c_4}^1\\
             \midrule
 			a_{\neg x_1}^1:&\quad o_{x_1}^1, d_{x_1}^{11}, d_{x_1}^{12},o_{x_1}^2, h_{x_1}^1,  h_{x_1}^2,h_{x_1}^3,o_{c_1}^3, o_{c_1}^2, o_{c_1}^1\\
 			a_{\neg x_1}^2:&\quad d_{x_1}^{21}, o_{x_1}^1, o_{x_1}^2,{d_{x_i}^{22}}, h_{x_1}^1,  h_{x_1}^2,h_{x_1}^3,o_{c_3}^3, o_{c_3}^2, o_{c_3}^1\\
 			a_{\neg x_2}^1:&\quad o_{x_2}^1, d_{x_2}^{11}, d_{x_2}^{12},o_{x_2}^2, h_{x_2}^1,  h_{x_2}^2,h_{x_2}^3,o_{c_1}^3, o_{c_1}^2, o_{c_1}^1\\
 			a_{\neg x_2}^2:&\quad d_{x_2}^{21}, o_{x_2}^1, o_{x_2}^2,{d_{x_i}^{22}}, h_{x_2}^1,  h_{x_2}^2,h_{x_2}^3,o_{c_4}^3, o_{c_4}^2, o_{c_4}^1\\
 			a_{\neg x_3}^1:&\quad o_{x_3}^1, d_{x_3}^{11}, d_{x_3}^{12},o_{x_3}^2, h_{x_3}^1,  h_{x_3}^2,h_{x_3}^3,o_{c_1}^3, o_{c_1}^2, o_{c_1}^1\\
 			a_{\neg x_3}^2:&\quad d_{x_3}^{21}, o_{x_3}^1, o_{x_3}^2,{d_{x_i}^{22}}, h_{x_3}^1,  h_{x_3}^2,h_{x_3}^3,o_{c_3}^3, o_{c_3}^2, o_{c_3}^1\\
            \midrule
 			a_{x_1}^1:&\quad o_{\neg x_1}^1, d_{\neg x_1}^{11}, h_{\neg x_1}^1,o_{\neg x_1}^2, h_{\neg x_1}^2,h_{\neg x_1}^3,d_{\neg x_1}^{12}o_{c_2}^3, o_{c_2}^2, o_{c_2}^1\\
 			a_{x_1}^2:&\quad d_{\neg x_1}^{21}, o_{\neg x_1}^1, o_{\neg x_1}^2, h_{\neg x_1}^1,h_{\neg x_1}^2,h_{\neg x_1}^3,d_{\neg x_1}^{22},o_{c_4}^3, o_{c_4}^2, o_{c_4}^1\\
 			a_{x_2}^1:&\quad o_{\neg x_2}^1, d_{\neg x_2}^{11}, h_{\neg x_2}^1,o_{\neg x_2}^2, h_{\neg x_2}^2,h_{\neg x_2}^3,d_{\neg x_2}^{12},o_{c_2}^3, o_{c_2}^2, o_{c_2}^1\\
 			a_{x_2}^2:&\quad d_{\neg x_2}^{21}, o_{\neg x_2}^1, o_{\neg x_2}^2, h_{\neg x_2}^1,h_{\neg x_2}^2,h_{\neg x_2}^3,d_{\neg x_2}^{22},o_{c_3}^3, o_{c_3}^2, o_{c_3}^1\\
 			a_{x_3}^1:&\quad o_{\neg x_3}^1, d_{\neg x_3}^{11}, h_{\neg x_3}^1,o_{\neg x_3}^2, h_{\neg x_3}^2,h_{\neg x_3}^3,d_{\neg x_3}^{12},o_{c_2}^3, o_{c_2}^2, o_{c_2}^1\\
 			a_{x_3}^2:&\quad d_{\neg x_3}^{21}, o_{\neg x_3}^1, o_{\neg x_3}^2, h_{\neg x_3}^1,h_{\neg x_3}^2,h_{\neg x_3}^3,d_{\neg x_3}^{22},o_{c_4}^3, o_{c_4}^2, o_{c_4}^1\\
                                          \midrule                      
			\end{align*}

The picking sequence is as follows

		\begin{align*}
\text{choice round 1}\quad &1, a_{\neg x_1}^1, a_{\neg x_1}^2, a_{x_1}^1, a_{ x_1}^2, 1, a_{\neg x_1}^1, a_{\neg x_1}^2, a_{\neg x_1}^1, a_{\neg x_1}^2, a_{\neg x_1}^1, a_{\neg x_1}^2,1, a_{x_1}^1, a_{x_1}^2, 1\\
\text{choice round 2}\quad &1, a_{\neg x_2}^1, a_{\neg x_2}^2, a_{x_2}^1, a_{ x_2}^2, 1, a_{\neg x_2}^1, a_{\neg x_2}^2, a_{\neg x_2}^1, a_{\neg x_2}^2, a_{\neg x_2}^1, a_{\neg x_2}^2,1, a_{x_2}^1, a_{x_2}^2, 1\\
\text{choice round 3}\quad &1, a_{\neg x_3}^1, a_{\neg x_3}^2, a_{x_3}^1, a_{ x_3}^2, 1, a_{\neg x_3}^1, a_{\neg x_3}^2, a_{\neg x_3}^1, a_{\neg x_3}^2, a_{\neg x_3}^1, a_{\neg x_3}^2,1, a_{x_3}^1, a_{x_3}^2, 1\\
\text{clause round 1}\quad &a_{\neg x_1}^1, a_{\neg  x_2}^1,a_{\neg  x_3}^1\\
\text{choice round 2}\quad &a_{x_1}^1, a_{x_2}^1,a_{x_3}^1\\
\text{choice round 3}\quad &a_{\neg  x_1}^2, a_{x_2}^2,a_{\neg  x_3}^2\\
\text{choice round 4}\quad &a_{x_1}^2, a_{\neg  x_2}^2,a_{x_3}^2\\
\text{collection round}\quad &1,1,1,1
	\end{align*}

The formula is satisfiable if $x_1$ is true, $x_2$ is false and $x_3$ is false. 
Let us show how the allocation looks like when agent $1$ picks items according to the truth assignment.

	\begin{table}[h!]
		\centering
			\scalebox{0.63}{
	\begin{tabular}{l|ccccc|ccccc|cccccc}
	  \hline
	\textbf{Stage}&  1 & 2 &3 &4&5&6&7&8&9&10&11&12&13&14&15&16\\
	  \hline 
	\textbf{Agent }&1& $a_{\neg x_1}^1$&$a_{\neg x_1}^2$ &$a_{x_1}^1$&$a_{ x_1}^2$&1&$a_{\neg x_1}^1$&$a_{\neg x_1}^2$&$a_{x_1}^1$&$a_{x_1}^2$&$a_{\neg x_1}^1$&$a_{\neg x_1}^2$&1&$a_{x_1}^1$&$a_{x_1}^2$&1  \\
	 \hline
	\textbf{Item picked}& $o_{\neg x_1}^1$& $o_{x_1}^1$&$d_{x_1}^{21}$ &$d_{\neg x_1}^{11}$&$d_{\neg x_1}^{21}$&$o_{\neg x_1}^2$&$d_{x_1}^{11}$&$o_{x_1}^2$&$h_{\neg x_1}^{1}$&$h_{\neg x_1}^{2}$&$d_{x_1}^{12}$&$d_{x_1}^{22}$&$h_{\neg x_1}^{3}$&$d_{\neg x_1}^{12}$&$d_{\neg x_1}^{22}$&$h_{x_1}^1$  \\
	  \hline
	\end{tabular}
	}
	\caption{Choice round 1 for variable $x_1$ in which agent $1$ makes consistent choice $o_{\neg x_1}^1$ and $o_{\neg x_1}^2$ so that variable $x_1$ is set to true.}
	\label{table:choiceround1}
	\end{table}

	\begin{table}[h!]
		\centering
			\scalebox{0.63}{
	\begin{tabular}{l|ccccc|ccccc|cccccc}
	  \hline
	\textbf{Stage}&  1 & 2 &3 &4&5&6&7&8&9&10&11&12&13&14&15&16\\
	  \hline 
	\textbf{Agent }&1& $a_{\neg x_2}^1$&$a_{\neg x_2}^2$ &$a_{x_2}^1$&$a_{ x_2}^2$&1&$a_{\neg x_2}^1$&$a_{\neg x_2}^2$&$a_{x_2}^1$&$a_{x_2}^2$&$a_{\neg x_2}^1$&$a_{\neg x_2}^2$&1&$a_{x_2}^1$&$a_{x_2}^2$&1  \\
	 \hline
	\textbf{Item picked}& $o_{x_2}^1$& $d_{x_2}^{11}$&$d_{x_2}^{21}$ &$o_{\neg x_2}^1$&$d_{\neg  x_2}^{21}$&$o_{x_2}^2$&$d_{x_2}^{12}$&$d_{x_2}^{22}$&$d_{\neg  x_2}^{11}$&$o_{\neg x_2}^2$&$h_{x_2}^1$&$h_{x_2}^2$&$h_{\neg x_2}^1$&$h_{\neg  x_2}^{2}$&$h_{\neg  x_2}^{3}$&$h_{x_2}^3$  \\
	  \hline
	\end{tabular}
	}
	\caption{Choice round 2 for variable $x_2$ in which agent $1$ makes consistent choice $o_{x_2}^1$ and $o_{x_2}^2$ so that variable $x_2$ is set to false. }
	\label{table:choiceround2}
	\end{table}

	\begin{table}[h!]
		\centering
			\scalebox{0.63}{
	\begin{tabular}{l|ccccc|ccccc|cccccc}
	  \hline
	\textbf{Stage}&  1 & 2 &3 &4&5&6&7&8&9&10&11&12&13&14&15&16\\
	  \hline 
	\textbf{Agent }&1& $a_{\neg x_3}^1$&$a_{\neg x_3}^2$ &$a_{x_3}^1$&$a_{ x_3}^2$&1&$a_{\neg x_3}^1$&$a_{\neg x_3}^2$&$a_{x_3}^1$&$a_{x_3}^2$&$a_{\neg x_3}^1$&$a_{\neg x_3}^2$&1&$a_{x_3}^1$&$a_{x_3}^2$&1  \\
	 \hline
	\textbf{Item picked}& $o_{x_3}^1$& $d_{x_3}^{11}$&$d_{x_3}^{21}$ &$o_{\neg x_3}^1$&$d_{\neg  x_3}^{21}$&$o_{x_3}^2$&$d_{x_3}^{12}$&$d_{x_3}^{22}$&$d_{\neg  x_3}^{11}$&$o_{\neg x_3}^2$&$h_{x_3}^1$&$h_{x_3}^2$&$h_{\neg x_3}^1$&$h_{\neg  x_3}^{2}$&$h_{\neg  x_3}^{3}$&$h_{x_3}^3$  \\
	  \hline
	\end{tabular}
	}
	\caption{Choice round 3 for variable $x_3$ in which agent $1$ makes consistent choice $o_{x_3}^1$ and $o_{x_3}^2$ so that variable $x_3$ is set to false. }
	\label{table:choiceround3}
	\end{table}
    
    \begin{table}[h!]
    	\centering
    		\scalebox{0.63}{
    \begin{tabular}{lcccc}
      \hline
    \textbf{Stage}&  1 & 2 &3\\
      \hline 
    \textbf{Agent }&$a_{\neg x_1}^1$&$a_{\neg x_2}^1$&$a_{\neg x_3}^1$\\
     \hline
    \textbf{Item picked}&$h_{x_1}^2$ &$o_{c_1}^3$&$o_{c_1}^3$\\
      \hline
    \end{tabular}
    }
    \caption{Clause round 1}
    \end{table}
    
    \begin{table}[h!]
    	\centering
    		\scalebox{0.63}{
    \begin{tabular}{lcccc}
      \hline
    \textbf{Stage}&  1 & 2 &3\\
      \hline 
    \textbf{Agent }&$a_{x_1}^1$&$a_{x_2}^1$&$a_{x_3}^1$\\
     \hline
    \textbf{Item picked}&$o_{c_2}^3$ &$d_{\neg x_2}^{12}$&$d_{\neg x_3}^{12}$\\
      \hline
    \end{tabular}
    }
    \caption{Clause round 2}
    \end{table}
    
    \begin{table}[h!]
    	\centering
    		\scalebox{0.63}{
    \begin{tabular}{lcccc}
      \hline
    \textbf{Stage}&  1 & 2 &3\\
      \hline 
    \textbf{Agent }&$a_{\neg x_1}^2$&$a_{x_2}^2$&$a_{\neg x_3}^2$\\
     \hline
    \textbf{Item picked}&$h_{x_1}^3$ &$d_{\neg x_2}^{22}$&$o_{c_3}^2$\\
      \hline
    \end{tabular}
    }
    \caption{Clause round 3}
    \end{table}
    
    \begin{table}[h!]
    	\centering
    		\scalebox{0.63}{
    \begin{tabular}{lcccc}
      \hline
    \textbf{Stage}&  1 & 2 &3\\
      \hline 
    \textbf{Agent }&$a_{x_1}^2$&$a_{\neg x_2}^2$&$a_{x_3}^2$\\
     \hline
    \textbf{Item picked}&$o_{c_4}^2$ &$o_{c_4}^2$&$d_{\neg x_3}^{22}$\\
      \hline
    \end{tabular}
    }
        \caption{Clause round 4}
    \end{table}

    \begin{table}[h!]
    	\centering
    		\scalebox{0.63}{
    \begin{tabular}{lcccccccc}
      \hline
    \textbf{Stage}&  1 & 2 &3&4\\
      \hline 
    \textbf{Agent }&$1$&$1$&$1$&$1$\\
     \hline
    \textbf{Item picked}&$o_{c_1}^1$ &$o_{c_2}^1$&$o_{c_3}^1$&$o_{c_4}^1$\\
      \hline
    \end{tabular}
    }
        \caption{Collection round}
    \end{table}

         \end{example}

\section{Computing a Best Response: Case of Two Agents}

The insights and results of \citet{BoLa14b} in their Section 3.1 and 3.2 still hold for $n=2$. 
For the sake of completeness, we outline the main arguments. 
We can ascertain that for  $n=2$, not only is there is a linear-time algorithm to compute a best response but each best response results in the same allocation irrespective of the cardinal utilities consistent with the ordinal preferences.


We present a series of lemmas.

			\begin{lemma}[From Proposition 7 by \citet{BoLa11a}]
				\label{lemma:simple-strategy}				Let the set of items $S=\{a_1,\ldots, a_{m_1}\}$ be such that $a_1\succ_2 a_2 \cdots \succ_2 a_{m_1}$. If $S$ is achievable by $1$ for some preference report, then agent $1$ can achieve $S$ by reporting $a_1\succ_1  a_2 \succ_1  \cdots a_{m_1}\succ_1 \{\text{all other items}\}$.
				\end{lemma}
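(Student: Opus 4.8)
The plan is to start from \emph{any} report that achieves $S$ and normalise it, in two stages, into the canonical report $a_1 \succ_1 \cdots \succ_1 a_{m_1} \succ_1 (\text{rest})$, showing that each normalisation step preserves the property of achieving $S$. Two structural facts about the two-agent case drive everything. First, since the picking sequence exhausts $O$, the two allocations partition $O$; hence agent $1$ achieves $S$ if and only if agent $2$ receives exactly $O\setminus S$, and in particular every item $1$ ever picks lies in $S$ (so $|S|$ equals the number of $1$'s turns). Second, at each of its turns agent $2$ takes the $\succ_2$-maximal remaining item, so along the run the items $2$ collects are strictly $\succ_2$-decreasing; equivalently, $2$ grabs an item $o$ precisely when every item $\succ_2$-above $o$ has already been removed.

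Stage (a): push the non-$S$ items down. Let $\sigma$ be a report achieving $S$, and let $\sigma_1$ be obtained from $\sigma$ by moving all items of $O\setminus S$ below all items of $S$, keeping the relative order within each block. I would show by induction on the steps of the sequence that the runs under $\sigma$ and $\sigma_1$ coincide. At an agent-$2$ step this is immediate. At an agent-$1$ step, the top available item under $\sigma$ is some $a\in S$ (as $1$ only ever picks $S$-items); since $\sigma_1$ keeps the relative order of $S$-items and places every $S$-item above $O\setminus S$, the top available item under $\sigma_1$ is again $a$. Thus $\sigma_1$ also achieves $S$, and now ranks $S$ as a top block.

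Stage (b): sort the $S$-block into $\succ_2$-order. Writing the $S$-block of $\sigma_1$ as $b_1\succ_1 \cdots \succ_1 b_{m_1}$, I would remove $\succ_2$-inversions by adjacent transpositions: whenever $b_{\ell+1}\succ_2 b_\ell$, swap the two. The heart of the argument is that one such swap preserves achievement of $S$. Couple the run before the swap with the run after it; they agree until, at some agent-$1$ turn, $b_\ell$ and $b_{\ell+1}$ are the two top available items, where the old run takes $b_\ell$ and the new run takes $b_{\ell+1}$, so the two remaining sets now differ only by $b_\ell \leftrightarrow b_{\ell+1}$. Because $b_\ell \prec_2 b_{\ell+1}$ and the old run never lets $2$ grab the more attractive $b_{\ell+1}$, every item $2$ takes in the intervening turns is $\succ_2$-above $b_{\ell+1}$, hence $\succ_2$-above $b_\ell$ as well; so $2$ takes exactly the same items in the new run and, in particular, never grabs the less attractive $b_\ell$. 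At agent $1$'s next turn the old run takes $b_{\ell+1}$ and the new run takes $b_\ell$; from then on both reports restrict identically to the remaining items and the remaining sets coincide, so the runs re-synchronise and yield the same final allocation. Hence $1$ still receives $S$, and bubble-sorting the block reaches exactly the canonical report.

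The step I expect to be the main obstacle is the coupling in Stage (b): one must argue carefully that, through all of agent $2$'s turns between the point of divergence and agent $1$'s next turn, the two runs stay synchronised and $2$ is never tempted by the demoted item $b_\ell$. This is exactly where the hypothesis that $S$ is already achieved (so $2$ avoids the more desirable $b_{\ell+1}$) and the monotonicity of $2$'s greedy picks are both essential; the remaining bookkeeping (Stage (a) and the fact that the allocations partition $O$) is routine.
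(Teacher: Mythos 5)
Your proof is correct, and Stage (a) coincides exactly with the paper's first step (pushing the items of $O\setminus S$ below $S$ leaves the run unchanged). The main step, however, takes a genuinely different route. The paper argues by contraposition plus counting: assume the canonical report $a_1\succ_1\cdots\succ_1 a_{m_1}\succ_1(\text{rest})$ fails, take the \emph{earliest} stage $\ell$ at which agent $2$ grabs some $a_i\in S$, and note that by then agent $1$ has had only $i-1$ picks while agent $2$ has had $\ell-i+1$ picks and only $\ell-i$ of agent $2$'s top $\ell$ items lie outside $S$; a pigeonhole argument then shows agent $2$ secures an item of $\{a_1,\ldots,a_i\}$ under \emph{every} report, so $S$ is not achievable at all. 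You instead go forward: starting from an arbitrary successful report, you bubble-sort the $S$-block by adjacent transpositions and show each swap preserves the outcome via a run coupling. Your key invariants are exactly right: in the old run agent $2$ never takes $b_{\ell+1}$, so every intervening pick of agent $2$ is $\succ_2$-above $b_{\ell+1}\succ_2 b_\ell$, the two runs differ only by the exchange $b_\ell\leftrightarrow b_{\ell+1}$, and agent $1$'s next turn re-synchronises them. The paper's argument is shorter, and its contrapositive form (failure of the canonical report certifies non-achievability) is precisely what feeds condition (\emph{iii}) of the paper's subsequent equivalence lemma; your exchange argument is more elementary, avoids the counting entirely (the paper's arithmetic there is in fact garbled by typos such as ``$\ell-(i-1)=\ell=i+1$''), and proves the slightly stronger fact that every intermediate report along the sort yields the same full allocation, not merely that $S$ remains achievable.
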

				\begin{proof}
				Consider any report $\succ_1'$ such that the result of $(\succ_1',\succ_2$ is that agent $1$ gets $S$. Then changing $\succ_1'$ to $\succ_1''$ in which items in $O\setminus S$ are moved to items after $S$ does not change the outcome. At each stage when agent $i\in \{1,2\}$ picks, the most preferred available item of $i$ remains the same.

We now show that if agent $1$ can achieve $S$, he can do so by
reporting $a_1\succ_1  a_2 \cdots a_{m_1}\succ_1 \{\text{all other items}\}$.	Assume that agent $1$ does not get $S$ by report  $a_1\succ_1  a_2 \cdots a_{m_1}\cdots \succ_1 \{\text{all other items}\}$. We show that agent $1$ cannot get $S$ by any other report.
Let us consider the earliest stage in which agent $2$ gets an item from $S$. Let the item be $a_i$ in stage $\ell$. This means that by stage $\ell$ agent $2$ did not any items from $\{a_1,\ldots, a_{i-1}\}$ which agent $1$ got.
Note that by stage $\ell$, agent $1$ gets $i-1$ picks and agent $2$ gets $\ell-(i-1)=\ell=i+1$ picks.
Note that $\{a\midd a\succ_2 a_{i}\}$ is the union of $\{a_1,\ldots, a_{i-1}\}$ and the items in $O\setminus S$ that agent $2$ got before stage $\ell$.
The most preferred $\ell$ items of agent $2$ include $i$ items from $S$ and $\ell-1-(i-1)=\ell-i$ other items. Since agent $2$ has $\ell-i+1$ picks, he will be able to get one item from $\{a_1,\ldots, a_{i}\}$ irrespective of what agent $1$ reports because by stage $\ell$, agent $1$ has only $i-1$ picks.
							\end{proof}

			\begin{lemma}[From Proposition 7 by \citet{BoLa11a}]
				Let the set of items $S=\{a_1,\ldots, a_{m_1}\}$ be such that $a_1\succ_2 a_2 \cdots \succ_2 a_{m_1}$. Then the following conditions are equivalent:
\begin{enumerate}
	\item $S$ is achievable. 
	\item agent $1$ can achieve $S$ by reporting $a_1\succ_1  a_2 \succ_1  \cdots a_{m_1}\cdots \succ_1 \{\text{all other items}\}$.
	\item when agent $1$ reports $a_1\succ_1  a_2 \succ_1  \cdots a_{m_1}\cdots \succ_1 \{\text{all other items}\}$, for each picking stage $\ell$ in which agent $1$ picks his $i$-th, all the $\ell-i$ items allocated to $2$ by stage $\ell$ more preferred than $a_i$.
\end{enumerate}
							\end{lemma}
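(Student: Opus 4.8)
The equivalence of (\textit{i}) and (\textit{ii}) is immediate from Lemma~\ref{lemma:simple-strategy} together with the trivial converse (any report achieving $S$ witnesses that $S$ is achievable), so the work lies in establishing (\textit{ii})$\Leftrightarrow$(\textit{iii}). Throughout I would fix the canonical report in which agent $1$ ranks $a_1\succ_1 a_2\succ_1\cdots\succ_1 a_{m_1}$ above all remaining items and run sequential allocation; note that for $S$ to be achievable agent $1$ must have exactly $m_1$ turns, so I work in that regime (otherwise all three conditions fail). The plan is to pass through an intermediate per-pick availability condition: agent $1$ achieves $S$ under this report if and only if, for every $i$, the item $a_i$ is still available at the stage $\ell$ of agent $1$'s $i$-th pick.

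First I would prove this availability reformulation by induction on the picks of agent $1$. If agent $1$ has collected $a_1,\ldots,a_{i-1}$ in his first $i-1$ picks, then at his $i$-th pick his most preferred available item is $a_i$ precisely when $a_i$ is available, in which case he takes it and the induction continues; if instead $a_i$ has already been removed (necessarily by agent $2$, since agent $1$ has taken only $a_1,\ldots,a_{i-1}$), then agent $1$ is forced to take an item other than $a_i$ and can never recover it, so $S$ is lost. Next I would characterize availability through agent $2$'s behaviour. The key observation is that agent $2$'s successive picks are strictly decreasing in $\succ_2$: if agent $2$ takes $b$ at one turn (the $\succ_2$-maximal available item then) and $b'$ at a later turn, then $b'$ was already available at the earlier turn, so $b'\preceq_2 b$, and since $b'\neq b$ we get $b'\prec_2 b$. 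Since the available set only shrinks, $a_i$ surviving to stage $\ell$ means it was available at every earlier stage as well.

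With these two facts in hand I would prove the claim that $a_i$ is available at stage $\ell$ if and only if all $\ell-i$ items held by agent $2$ before stage $\ell$ are $\succ_2 a_i$. One direction is immediate, as none of those items can then equal $a_i$. For the converse I would argue by contradiction: if agent $2$ had taken some $b\preceq_2 a_i$ at a turn before stage $\ell$, then at that turn $a_i$ was available (it survives to stage $\ell$), so either $b\prec_2 a_i$, contradicting the greedy choice of $b$ over the better available $a_i$, or $b=a_i$, contradicting that $a_i$ remains available. Finally, a counting step matches the statement of (\textit{iii}): before agent $1$'s $i$-th pick at stage $\ell$ there have been $\ell-1$ picks, of which $i-1$ belong to agent $1$, leaving agent $2$ with exactly $\ell-i$ items. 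Combining the availability reformulation with this characterization gives (\textit{ii})$\Leftrightarrow$(\textit{iii}). I expect the main obstacle to be the converse direction of the availability characterization, where the greedy nature of agent $2$'s picks and the monotone shrinking of the available set must be invoked jointly; the induction and the pick-counting are routine bookkeeping.
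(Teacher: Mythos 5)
Your proposal is correct and follows essentially the same route as the paper: the $(i)\Leftrightarrow(ii)$ equivalence via Lemma~\ref{lemma:simple-strategy} plus the trivial converse, and the $(ii)\Leftrightarrow(iii)$ equivalence via the greedy behaviour of agent $2$ (agent $2$ busy with $\succ_2$-better items leaves $a_i$ available; agent $2$ taking a $\succ_2$-worse item while $a_i$ survives contradicts greediness). Your intermediate availability condition, explicit induction, and pick-counting merely flesh out in detail what the paper's two-sentence arguments leave implicit, and your handling of the turn-count regime is a careful touch the paper omits.
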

				\begin{proof}
					$(ii)$ trivially implies $(i)$. Lemma~\ref{lemma:simple-strategy} shows that $(ii)$ implies $(i)$.

					We prove that $(iii)$ implies $(ii)$
			Let us assume that for each picking stage $\ell$ in which agent $1$ picks his $i$-th, all the $\ell-i$ items allocated to $2$ by stage $\ell$ more preferred than $a_i$. Then agent $2$ is always busy getting more preferred items and agent $1$ get $a_i$ in his $i$-th pick.

					We prove that $(ii)$ implies $(iii)$.
				Now assume that in some picking stage $\ell$ in which agent $1$ picks his $i$-th item ($a_i$), not all the $\ell-i$ items allocated to $2$ by stage $\ell$ are more preferred than $a_i$. But this means that $2$ would have picked $a_i$ already by stage $\ell$.
					\end{proof}

					\begin{lemma}[Proposition 9 of \citet{BoLa11a}]\label{lemma:br-lex}
						For two agents, lexicographic best response is polynomial-time computable.
						\end{lemma}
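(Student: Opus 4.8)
The plan is to reduce the lexicographic best response to greedily constructing the single allocation that agent $1$ would most like to receive, and to show that each greedy step is a polynomial-time feasibility test built on \lemref{lemma:simple-strategy}. First I would record the basic fact that, whatever agent $1$ reports, sequential allocation gives him exactly $m_1$ items, where $m_1$ is the number of his turns in the sequence. Hence the achievable allocations of agent $1$ are exactly the achievable sets of size $m_1$, and maximizing lexicographic utility amounts to selecting, in agent $1$'s true preference order $g_1 \succ_1 g_2 \succ_1 \cdots$, the lexicographically greatest achievable set of that size. Since this target depends only on the order $\succ_1$ and not on the specific numeric utilities, computing it will also show (as in the surrounding discussion) that the manipulator's allocation is the same for every consistent lexicographic utility.

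Second, I would isolate the notion of a set $Q$ being \emph{securable}, meaning that agent $1$ has some report under which his allocation contains $Q$. Two properties are needed. (a) Securability is downward closed: if $R$ is securable and $Q \subseteq R$, then the report securing $R$ also secures $Q$. (b) Securability is testable in polynomial time: writing $Q = \{b_1,\dots,b_k\}$ with $b_1 \succ_2 \cdots \succ_2 b_k$, agent $1$ can secure $Q$ if and only if the canonical report $b_1 \succ_1 \cdots \succ_1 b_k \succ_1 \{\text{all other items}\}$ already yields him all of $Q$, which is checked by one simulation of the mechanism. The nontrivial direction of (b) is the exchange argument of \lemref{lemma:simple-strategy}: starting from any report securing $Q$ and pushing the items of $Q$ (sorted by $\succ_2$) to the top while demoting every other item leaves agent $1$ in possession of $Q$, because at each of his own picks the top available target is reached no later than before, while at each of agent $2$'s picks the most preferred available item is unchanged; note also that with the $b_i$ at the very top of his list, agent $1$ never spends a pick on a filler item while some $b_i$ is still available, so the internal order of the filler is irrelevant. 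I expect this extension of \lemref{lemma:simple-strategy} from full target sets to partial target sets to be the main obstacle, although it follows the same stage-by-stage bookkeeping as the characterization already proved.

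Third, with the securability test in hand I would run the greedy: maintain a committed set $P$, initially empty, and process the items in agent $1$'s preference order $g_1, g_2, \dots$, adding $g_j$ to $P$ exactly when $|P| < m_1$ and $P \cup \{g_j\}$ is securable. I would prove by induction on $j$ that after $g_j$ is processed, $P = S^\star \cap \{g_1,\dots,g_j\}$, where $S^\star$ is the unique lexicographically greatest securable set of size $m_1$. The step uses only downward closure and the optimality of $S^\star$. If the greedy adds $g_j$, then some securable $m_1$-set contains $P \cup \{g_j\}$ and, agreeing with $S^\star$ on $g_1,\dots,g_{j-1}$, would be lexicographically at least as good as $S^\star$ unless $g_j \in S^\star$, forcing $g_j \in S^\star$. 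If the greedy skips $g_j$ because $P \cup \{g_j\}$ is not securable, then $g_j \in S^\star$ would make $P \cup \{g_j\} \subseteq S^\star$ securable by downward closure, a contradiction, so $g_j \notin S^\star$. In both cases the invariant is preserved, and after all items are processed $P = S^\star$ with $|P| = m_1$.

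Finally I would return the canonical report for the resulting size-$m_1$ set $P$, which secures $P$ by property (b). The whole procedure performs $O(m)$ securability tests, each a single $O(m^2)$ simulation of sequential allocation, so it runs in polynomial time, which proves the lemma.
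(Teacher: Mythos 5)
Your proposal takes essentially the same route as the paper: the paper's proof is exactly this greedy pass through agent $1$'s preference order, testing at each step whether the current target set plus the new item is achievable via the canonical report of \lemref{lemma:simple-strategy} (items ordered by agent $2$'s preferences over them). The extra details you supply --- the induction establishing greedy optimality and the extension of \lemref{lemma:simple-strategy} from full-size target sets to partial (``securable'') sets --- are steps the paper leaves implicit, so yours is a more fully worked-out version of the same argument.
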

					\begin{proof}
						Let us assume that the preferences of the manipulator are $o_1,o_2,\ldots, o_m$.
						We set the target set $S$ to empty and $O'$ to $O$. 
						Take the most preferred item  $o\in O'$ and check whether $S\cup \{o\}$ is achievable by 1. If yes, we append $o$ to $S$. In either case we delete $o$ from $O'$. We continue in this fashion until $O'$ is empty. 
					For $n=2$, it can be easily checked whether a given subset $S$ of items is achievable by letting agent 1 express the items in $S$ in the same order of preferences as agent 2's preferences over $S$. This follows from Lemma~\ref{lemma:simple-strategy}.
						\end{proof}

%

	Next, we see that for $n=2$, the outcome of any best response is the same for the manipulator. 
    The argument of \citet{BoLa14b}  works  as it is directly for the case of two agents. 

	\begin{lemma}[Lemma 1 of \citet{BoLa14b}]\label{lemma:achieve}
		Let $A$ and $B$ be achievable by agent 1. Let $a=\max_{\pref_1}((A\setminus B)\cup (B\setminus A))$ and assume that $a\in A$. Let $b=\max_{\pref_1}(B\setminus A)$. Then $B\cup \{a\}\setminus \{b\}$ is achievable for agent $1$.
		\end{lemma}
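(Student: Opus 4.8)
The plan is to reduce achievability to the stage-by-stage test isolated in the preceding lemmas and then run an exchange argument on the resulting rank sequences. First I would fix notation: let $p_1<\cdots<p_k$ be the stages at which agent $1$ picks (so every achievable set has size $k$), and for an item $x$ write $r(x)$ for its position in agent $2$'s order $\succ_2$, with smaller $r$ meaning more preferred by $2$. By \lemref{lemma:simple-strategy} together with the characterization that follows it, a $k$-set $S$ is achievable exactly when, listing $S$ as $s_1\succ_2\cdots\succ_2 s_k$ and letting agent $1$ report that order, condition $(iii)$ holds; unwinding $(iii)$ shows this is equivalent to the pointwise domination $r(s_i)\ge p_i$ for every $i$. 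Thus I would restate the hypothesis as: the $\succ_2$-sorted rank vectors of $A$ and of $B$ each dominate $(p_1,\dots,p_k)$, and I must establish the same for $C:=B\cup\{a\}\setminus\{b\}$.

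Second, I would record the structural facts the hypotheses supply. Since $a=\max_{\succ_1}((A\setminus B)\cup(B\setminus A))$ lies in $A$, it lies in $A\setminus B$, so $b\in B\setminus A$ is distinct from $a$ and $|C|=|B|=k$; moreover every element of $B\setminus A$ is $\prec_1 a$, in particular $b\prec_1 a$, which is what makes the swap a utility improvement. The object to control is the $\succ_2$-sorted sequence of $C$: it is obtained from that of $B$ by deleting $b$ (sitting at some position $j$, say $b=b_j$) and inserting $a$ at the slot determined by $r(a)$. I would split on whether $r(a)<r(b)$ or $r(a)>r(b)$; in each case the deletion/insertion shifts only a contiguous block of $B$'s entries by one index while leaving the entries outside the block untouched, so domination for those untouched coordinates is inherited verbatim from $B$, and for the shifted block the coordinates move only toward larger pick-stage indices against unchanged or larger ranks, where $B$'s domination plus the monotonicity of $(p_i)$ suffices.

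The crux is the single genuinely new coordinate: the slot $q$ that $a$ occupies in $C$, where I must verify $r(a)\ge p_q$. This is exactly the step that forces me to invoke the achievability of $A$ together with the maximality defining $a$ and $b$: $a$'s rank budget has to be transferred from the slot $a$ occupies in the $\succ_2$-sorted sequence of $A$, and the choice $b=\max_{\succ_1}(B\setminus A)$ must be shown to be precisely the element whose removal frees the index needed for $a$. I expect this positional accounting — matching $a$'s guaranteed rank inherited from $A$ to its required pick-stage in $C$, and confirming that the prescribed $b$ is the exchange partner whose deletion restores domination rather than some other element of $B\setminus A$ — to be the main obstacle, since it is the one point where the two unrelated sets $A$ and $B$ must be linked and where the choice of $b$ is genuinely at stake; the remainder is bookkeeping that falls out of the domination property of $A$ and $B$.
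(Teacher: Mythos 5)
First, a caveat about the comparison itself: the paper never proves this lemma. It is quoted as Lemma~1 of \citet{BoLa14b} with the remark that the original argument ``works as it is'' for two agents, so there is no in-paper proof to measure your proposal against; I can only assess it on its own terms. Your opening move is correct and is the right foundation: combining \lemref{lemma:simple-strategy} with the characterization that follows it, a $k$-set $S=\{s_1\succ_2\cdots\succ_2 s_k\}$ is achievable if and only if $r(s_i)\ge p_i$ for all $i$. The trouble begins with the part you dismiss as bookkeeping. When $r(a)<r(b)$, deleting $b$ and inserting $a$ shifts every element of $B$ lying strictly between them in agent~2's order from position $i$ to position $i+1$, where the requirement is $r(\beta_i)\ge p_{i+1}$ --- strictly \emph{stronger} than the inherited $r(\beta_i)\ge p_i$; the monotonicity of $(p_i)$ works against you there, not for you. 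So your claim that the shifted block is handled by $B$'s domination is backwards in exactly the dangerous case.

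Moreover, the crux you defer (``confirming that the prescribed $b$ is the exchange partner'') is not merely the hard step: it cannot be done, because the lemma as stated here is false. Take the sequence $1212$ (so $p=(1,3)$), agent~2's order $x_1\succ_2 x_2\succ_2 x_3\succ_2 x_4$, agent~1's true order $x_1\succ_1 x_4\succ_1 x_2\succ_1 x_3$, $A=\{x_1,x_3\}$ and $B=\{x_2,x_4\}$, both achievable. Then $a=\max_{\succ_1}((A\setminus B)\cup(B\setminus A))=x_1\in A$ and $b=\max_{\succ_1}(B\setminus A)=x_4$, yet $B\cup\{a\}\setminus\{b\}=\{x_1,x_2\}$ is not achievable: whichever of $x_1,x_2$ agent~1 takes at stage~1, agent~2 takes the other at stage~2. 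This is precisely the bad block shift above ($x_2$ moves to position $2$ with $r(x_2)=2<p_2=3$). What is true --- and all that the subsequent theorem needs --- is the existential version: for any $a\in A\setminus B$ there is \emph{some} $b\in B\setminus A$, for instance the element of $B\setminus A$ most preferred by agent~2, such that $B\cup\{a\}\setminus\{b\}$ is achievable. In your own framework: let $\pi(\ell)=|\{i\midd p_i\le\ell\}|$ and let $\ell^*$ be the least $\ell\ge r(a)$ with $|\{x\in B\midd r(x)\le\ell\}|=\pi(\ell)$ (it exists, since $\ell=m$ qualifies); then $B\cup\{a\}\setminus\{b\}$ is achievable exactly when $r(b)\le\ell^*$, and if no element of $B\setminus A$ had rank at most $\ell^*$, then $A$ would contain $\{a\}\cup\{x\in B\midd r(x)\le\ell^*\}$, i.e.\ $\pi(\ell^*)+1$ items of rank at most $\ell^*$, contradicting $A$'s own domination. (Equivalently: achievable sets are the bases of a transversal matroid, and this is basis exchange.) Since $a\succ_1 b$ for \emph{every} $b\in B\setminus A$ by the maximality of $a$, the existential lemma still yields the utility improvement that drives the uniqueness theorem; but the lemma in the stated form, and any proof of it, including yours, cannot stand.
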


            Based on the lemmas above, the following theorem can be proved.

			\begin{theorem}[\citet{BoLa14a,BoLa14b}]
				For $n=2$, there exists a polynomial-time algorithm to compute a best response. Furthermore the allocation of responding agent as a result of the best response is unique. 
				\end{theorem}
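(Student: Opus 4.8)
The plan is to derive both conclusions from the exchange property of \lemref{lemma:achieve}, using the greedy procedure of \lemref{lemma:br-lex} to produce the target allocation. Let $L$ denote the output of the greedy algorithm of \lemref{lemma:br-lex}: process the items in agent $1$'s preference order $\succ_1$ and keep each item whenever the set built so far together with it is still achievable, achievability being testable efficiently by \lemref{lemma:simple-strategy}. By \lemref{lemma:br-lex}, $L$ is achievable and is exactly the lexicographically maximal achievable set with respect to $\succ_1$, and it is produced in polynomial time. Note also that every achievable set has the same cardinality, namely the number of turns of agent $1$ in the sequence, since agent $1$ picks exactly one available item at each of its turns; this is what lets us compare $L$ with any other achievable set through one-for-one swaps.

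The core of the argument is to show that $L$ dominates every other achievable set uniformly over all consistent additive utilities. Fix any achievable $B\neq L$. Because $L$ is lexicographically maximal, the most $\succ_1$-preferred item of the symmetric difference $(L\setminus B)\cup(B\setminus L)$ must lie in $L$; otherwise $B$ would be lexicographically larger than $L$. Hence I can invoke \lemref{lemma:achieve} with $A=L$: setting $a=\max_{\succ_1}((L\setminus B)\cup(B\setminus L))\in L\setminus B$ and $b=\max_{\succ_1}(B\setminus L)$, the set $B'=B\cup\{a\}\setminus\{b\}$ is achievable. Since $a$ is the $\succ_1$-maximum of the whole symmetric difference while $b\in B\setminus L$ is a distinct member of it, we have $a\succ_1 b$; as $a\notin B$, $b\in B$, and utilities are additive, $u_1(B')-u_1(B)=u_1(a)-u_1(b)>0$ for \emph{every} $u_1$ consistent with $\succ_1$. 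Moreover the swap deletes both $a$ and $b$ from the symmetric difference, so $|(L\setminus B')\cup(B'\setminus L)|<|(L\setminus B)\cup(B\setminus L)|$.

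Iterating this exchange produces a chain $B=B_0,B_1,\ldots,B_t=L$ of achievable sets in which each step strictly increases agent $1$'s utility simultaneously for all consistent additive utilities, and the chain must terminate at $L$ because the symmetric difference with $L$ strictly shrinks at each step. Consequently $u_1(L)>u_1(B)$ for every achievable $B\neq L$ and every $u_1\in\mathcal{U}(\succ_1)$. This is exactly the strong uniqueness asserted: for any consistent additive utility, $L$ is the unique utility-maximizing achievable set, so the best-response allocation is $L$ regardless of the particular cardinal values. The polynomial-time claim then follows at once: compute $L$ greedily, and by \lemref{lemma:simple-strategy} realize it by reporting the items of $L$ in the order of agent $2$'s preferences followed by all remaining items.

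The step I expect to demand the most care is justifying that \lemref{lemma:achieve} can always be applied with the greedy set $L$ in the role of $A$, i.e.\ that the top element of the symmetric difference always lands on the $L$ side. This rests on identifying the greedy output precisely with the lexicographic maximum, which I would take from \lemref{lemma:br-lex}, and on verifying that each exchange is both strictly utility-improving and symmetric-difference-reducing, so that the iteration is well founded and genuinely converges to $L$ rather than stalling or cycling.
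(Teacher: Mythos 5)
Your proof is correct and takes essentially the same route as the paper's: the paper's own (two-sentence) proof likewise combines \lemref{lemma:br-lex} (greedy computation of the lexicographic best response) with \lemref{lemma:achieve} (the exchange property) to conclude that computing a best response for \emph{any} consistent utility, in particular a lexicographic one, suffices. Your explicit iteration of the exchange step---strictly shrinking the symmetric difference with $L$ while strictly improving utility simultaneously for every consistent additive utility---is precisely the argument the paper leaves implicit, spelled out in full.
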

				\begin{proof}
					From Lemma~\ref{lemma:achieve}, we know that in order to compute a best response for a given utility function consistent with the ordinal preferences, it is sufficient to compute the best response for \emph{any} utility function consistent with the ordinal preferences. We know from Lemma~\ref{lemma:br-lex} that there exists a polynomial-time algorithm to compute the best response for lexicographic utilities. 
					\end{proof}
					
					\begin{corollary}
						For $n=2$, there exists a polynomial-time algorithm to verify a pure Nash equilibrium. 
						\end{corollary}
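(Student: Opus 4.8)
The plan is to reduce verification of a pure Nash equilibrium to two applications of the best-response algorithm, each polynomial-time by the preceding theorem. A reported profile $(\succ_1', \succ_2')$ is a pure Nash equilibrium exactly when neither agent can strictly increase their own utility by unilaterally changing their report while the other report is held fixed. Since $n=2$, fixing one agent's report and optimizing over the other agent's report is precisely the \textsc{Best Response} problem, so the whole notion of equilibrium decomposes into two independent best-response checks.

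First I would run sequential allocation once on $(\succ_1', \succ_2')$ to obtain the induced bundles $A_1$ and $A_2$, and compute the current utilities $u_1(A_1)$ and $u_2(A_2)$; this is clearly polynomial-time. Next I would invoke the best-response algorithm twice: fixing $\succ_2'$ and treating agent $1$ as the responder yields agent $1$'s optimal attainable utility $v_1$, and---by the symmetry of the two-agent setting---relabelling the agents so that agent $2$ is the responder against the fixed report $\succ_1'$ yields agent $2$'s optimal attainable utility $v_2$. Finally I would accept $(\succ_1', \succ_2')$ as a pure Nash equilibrium if and only if $u_1(A_1)\ge v_1$ and $u_2(A_2)\ge v_2$ (equivalently, with equality in both, since a best response attains the maximum).

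Each of these three steps runs in polynomial time, so the overall verification does too. The only point needing care---and it is mild---is that the best-response procedure was stated for a distinguished manipulator (agent $1$); I would note that in the two-agent case the roles are symmetric, so swapping the labels of the two agents converts agent $2$'s deviation problem into an instance of the same \textsc{Best Response} problem, and a single algorithm therefore handles both checks. I do not expect any genuine obstacle: the content is entirely in recognizing that an equilibrium test is just a conjunction of two best-response comparisons, each of which the preceding theorem already solves.
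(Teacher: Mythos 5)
Your proposal is correct and takes essentially the same route as the paper: the paper's proof likewise verifies a pure Nash equilibrium by computing each agent's best response with the polynomial-time algorithm from the preceding theorem and checking that neither agent can improve on the utility obtained under the reported profile. Your additional remarks (explicitly simulating sequential allocation to obtain current utilities, and relabelling agents so the two-agent symmetry lets one algorithm serve both checks) are just a more detailed spelling-out of the same argument.
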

\begin{proof}
	If there exists a polynomial-time algorithm to compute a best response, it can be used to compute the best response of each agent. A profile is in pure Nash equilibrium if and only if the best response of each agent yields at most the same utility as the preference reported in the given preference profile.  
	\end{proof}

    \section{Conclusions}
        
        In this paper, we showed that computing a best response under sequential allocation to maximize additive utility is NP-hard. The result is surprising because previously it has been claimed in the literature (COMSOC 2014 and ECAI 2014) that the problem admits a polynomial-time algorithm. Our NP-hardness result does not involve a constant number of agents. It remains an interesting open problem whether manipulating sequential allocation is NP-hard when the number of agents is three or some other constant. 
        
        \subsubsection*{Acknowledgment}
     Haris Aziz is funded by the Australian Government through the Department of Communications and the Australian Research Council through the ICT Centre of Excellence Program.

\bibliographystyle{model1a-num-names}

\end{document}